\documentclass[letterpaper]{article} %

\pdfoutput=1

\usepackage{aaai24}  %
\usepackage{times}  %
\usepackage{helvet}  %
\usepackage{courier}  %
\usepackage[hyphens]{url}  %
\usepackage{graphicx} %
\urlstyle{rm} %
\usepackage{natbib}  %
\usepackage{caption} %
\frenchspacing  %
\setlength{\pdfpagewidth}{8.5in} %
\setlength{\pdfpageheight}{11in} %
\usepackage{algorithm}
\usepackage{algorithmic}
\usepackage{subcaption}
\usepackage{tikz}
\usetikzlibrary{shapes,snakes}

\usepackage{booktabs}

\usepackage{newfloat}
\usepackage{listings}
\DeclareCaptionStyle{ruled}{labelfont=normalfont,labelsep=colon,strut=off} %
\lstset{%
	basicstyle={\footnotesize\ttfamily},%
	numbers=left,numberstyle=\footnotesize,xleftmargin=2em,%
	aboveskip=0pt,belowskip=0pt,%
	showstringspaces=false,tabsize=2,breaklines=true}
\floatstyle{ruled}
\newfloat{listing}{tb}{lst}{}
\floatname{listing}{Listing}
\pdfinfo{
/TemplateVersion (2024.1)
}

\usepackage{amsmath}
\usepackage{amssymb}
\usepackage{amsthm}
\usepackage{cleveref}
\usepackage{thm-restate}

\newtheorem{definition}{Definition}
\newtheorem{example}{Example}

\newtheorem{remark}{Remark}

\usepackage[switch, modulo]{lineno}

\setcounter{secnumdepth}{2} %

\title{Non-Deterministic Planning for Hyperproperty Verification}\author{
    Raven Beutner,
    Bernd Finkbeiner
}
\affiliations{
    CISPA Helmholtz Center for Information Security, Germany
}

\usepackage{amsmath}
\usepackage{amsfonts}
\usepackage{dsfont}

\newcommand{\ap}{\mathit{AP}}
\newcommand{\pathVars}{\mathcal{V}}

\newcommand{\directions}{\mathbb{D}}

\newcommand{\ldot}{\mathpunct{.}}

\newcommand{\quant}{\mathds{Q}}

\newcommand{\nat}{\mathbb{N}}

\newcommand{\calT}{\mathcal{T}}
\newcommand{\calG}{\mathcal{G}}
\newcommand{\calA}{\mathcal{A}}

\DeclareMathOperator{\ltlN}{\normalfont\textsf{X}}
\DeclareMathOperator{\ltlG}{\normalfont\textsf{G}}
\DeclareMathOperator{\ltlF}{\normalfont\textsf{F}}

\DeclareMathOperator{\ltlU}{\normalfont\textsf{U}}

\newcommand{\agents}{I}
\newcommand{\paths}{\mathit{Paths}}

\newcommand{\interactions}{\mathit{Exec}}

\newcommand{\tool}{\texttt{HyPlan}}

\newif\iffullversion
\fullversiontrue

\newcommand{\ifFull}[2]{\iffullversion#1\else#2\fi}

\begin{document}

\maketitle

\begin{abstract}
Non-deterministic planning aims to find a policy that achieves a given objective in an environment where actions have uncertain effects, and the agent -- potentially -- only observes parts of the current state.
Hyperproperties are properties that relate multiple paths of a system and can, e.g., capture security and information-flow policies.
Popular logics for expressing temporal hyperproperties -- such as HyperLTL -- extend LTL by offering selective quantification over executions of a system.
In this paper, we show that planning offers a powerful intermediate language for the automated verification of hyperproperties. 
Concretely, we present an algorithm that, given a HyperLTL verification problem, constructs a non-deterministic multi-agent planning instance (in the form of a QDec-POMDP) that, when admitting a plan, implies the satisfaction of the verification problem.
We show that for large fragments of HyperLTL, the resulting planning instance corresponds to a classical, FOND, or POND planning problem.
We implement our encoding in a prototype verification tool and report on encouraging experimental results.

\end{abstract}

\section{Introduction}\label{sec:intro}

AI planning is the task of finding a policy (aka.~plan) that ensures that a specified goal is reached. 
In this paper, we present an exciting new application of planning: the automated verification of \emph{hyperproperties}. 

\paragraph{Hyperproperties and HyperLTL.}

Hyperproperties generalize traditional trace properties by relating \emph{multiple} executions of a system \cite{ClarksonS08}.
A trace property -- specified, e.g.,  in LTL -- reasons about \emph{individual} executions in isolation, which falls short in many applications.
For example, assume we model the dynamics of a system as a transition system over atomic propositions $\{o, h, l\}$, and want to specify that the output $o$ of the system only depends on the low-security input $l$ and does not leak information about the secret input $h$. 
We cannot specify this as a trace property in LTL; we need to relate multiple executions to observe how different inputs impact the output. 
HyperLTL extends LTL with explicit quantification over executions \cite{ClarksonFKMRS14}, and allows for the specification of such a property.
For example, we can express observational determinism (OD) \cite{ZdancewicM03} as the following HyperLTL formula:
\begin{align}\label{eq:ni}
	\forall \pi\ldot \forall \pi'\ldot (l_{\pi} \leftrightarrow l_{\pi'}) \to \ltlG (o_{\pi} \leftrightarrow o_{\pi'})\tag{OD}
\end{align}
This formula states that on any \emph{pair} of executions $\pi, \pi'$ with identical low-security input, the output is (globally) the same. 
In other words, the output of the system behaves deterministically w.r.t.~the low-security input.
For non-deterministic systems, \ref{eq:ni} is often too strict, as any given low-security input might lead to multiple outputs. 
A relaxed notation -- called non-inference (NI) \cite{McLean94} -- can be expressed in HyperLTL as follows:
\begin{align}\label{eq:gni}
	\forall \pi\ldot \exists \pi'\ldot \ltlG \big( (o_{\pi} \leftrightarrow o_{\pi'}) \land (l_{\pi} \leftrightarrow l_{\pi'}) \land  \neg h_{\pi'}\big) \tag{NI}
\end{align}
That is, for any execution $\pi$, there \emph{exists} another execution $\pi'$ that has the same low-security behavior (via propositions $o$ and $l$), despite having a fixed ``dummy'' high-security input (in our case, we require that $h$ is always false, i.e., never holds on $\pi'$).
If \ref{eq:gni} holds, a low-security attacker cannot distinguish any high-security input from the dummy input.

\paragraph{HyperLTL Verification as Planning.}

Our goal is to automatically verify that a finite-state system $\calT$ satisfies a HyperLTL formula $\varphi$.
We introduce a novel verification approach that leverages the advanced methods developed within the planning community. 
Concretely, we present a reduction that soundly converts a HyperLTL verification problem into a planning problem.
Depending on the HyperLTL formula, our encoding uses several advanced features supported by modern planning frameworks, such as uncertain action effects (non-determinism) \cite{CimattiPRT03}, partial observations \cite{BertoliCRT06}, and multiple agents.
We show that -- by carefully combining these features -- we obtain a planning problem that is sound w.r.t.~the HyperLTL semantics: every plan can be translated back into a validity witness for the original verification problem.
As a consequence, our encoding allows us to utilize mature planning tools for the automated verification of HyperLTL properties.
We implement our encoding as a prototype and report on encouraging results using off-the-shelf planners.

\section{High-Level Overview}\label{sec:overview}

Before proceeding with a formal construction, we provide a high-level intuition of our encoding.
In HyperLTL, we can quantify over the executions of a system (as seen informally in \ref{eq:ni} and \ref{eq:gni}).
The overarching idea in our encoding is to let the planning agent(s) control all \emph{existentially quantified} executions, such that any valid plan directly corresponds to a witness for the existentially quantified executions.

\paragraph{Verification as Planning.}
As an example, assume we want to verify that \ref{eq:ni} does \emph{not} hold on a given system $\calT$, i.e., we want to find concrete executions $\pi, \pi'$ that violate the body of \ref{eq:ni}.
We can interpret this as a classical (single-agent) planning problem: each planning state maintains two locations in $\calT$, one for $\pi$ and one for $\pi'$, and, in each step, the actions update the locations for $\pi, \pi'$ by moving along the transitions in $\calT$.
The planning objective is to construct executions for $\pi, \pi'$ that \emph{violate} $(l_{\pi} \leftrightarrow l_{\pi'}) \to \ltlG (o_{\pi} \leftrightarrow o_{\pi'})$.
Any successful plan (i.e., sequence of transitions) then directly corresponds to concrete paths $\pi, \pi'$ disproving \ref{eq:ni}.

\paragraph{Non-Deterministic Planning.}

Verification becomes more interesting when the HyperLTL formula contains quantifier alternations, such as \ref{eq:gni}. 
Following the above intuition, a plan should provide a concrete witness for (the existentially quantified) $\pi'$, but -- this time -- we need to consider \emph{all possible} executions for (the universally quantified) $\pi$. 
Our idea is that we can approximate this behavior by viewing it as a fully observable non-deterministic (FOND) planning problem; intuitively, a plan controls the behavior of $\pi'$ while the behavior of $\pi$ is non-deterministic. 
That is, each action determines a successor location for $\pi'$ but also non-deterministically updates the location of $\pi$.
The agent's objective is to ensure that the resulting paths $\pi, \pi'$, together, satisfy $ \ltlG \big((o_{\pi} \leftrightarrow o_{\pi'}) \land (l_{\pi} \leftrightarrow l_{\pi'}) \land \neg h_{\pi'}\big)$ (the LTL body of \ref{eq:gni}).
Any plan (which is now \emph{conditional} on the non-deterministic outcomes) thus defines a concrete execution for $\pi'$, depending on the concrete execution for $\pi$.

\paragraph{Planning Under Partial Observations.}

In \ref{eq:gni}, $\pi'$ is quantified \emph{after} $\pi$, so the action sequence that defines the behavior of $\pi'$ can be based on the behavior of $\pi$.
This changes when quantifiers \emph{trail} existential quantification, e.g., in a formula of the form $\exists \pi \ldot\forall \pi'$. 
Here, we follow the same idea as before but ensure that the actions controlling $\pi$ are \emph{independent} of the current location of $\pi'$, i.e., the agent(s) act in a \emph{partially observable} non-deterministic (POND) domain.

\section{Related Work}\label{sec:relatedWork}

\paragraph{Non-Deterministic Planning.}

Non-deterministic planning provides a powerful intermediate language that encompasses problems such as reactive synthesis \cite{CamachoBMM18}, controller synthesis in MDPs, epistemic planning \cite{EngesserM20}, and generalized planning \cite{HuG11}.
Consequently, many methods and tools have been developed \cite{PereiraPMG22,MessaP23,CamachoTMBM17,GeffnerG18,RodriguezBSG21,MuiseMB12,KuterNRG08}, with some also supporting planning in partially observable domains \cite{BertoliCRT06,CimattiPRT03,BonetG11}.

\paragraph{HyperLTL Verification.}

Model checking of HyperLTL on finite-state transition systems is decidable, and existing complete algorithms utilize expensive automata complementations or inclusion checks \cite{FinkbeinerRS15,BeutnerF23}. 
There also exist cheaper (but incomplete) methods based, e.g., on a bounded model-checking \cite{HsuSB21}. 
For $\forall^*\exists^*$ HyperLTL properties (i.e., properties where no universal quantified appears below an existential quantifier), our encoding is closely related to game-based (or, equivalently, simulation-based) approaches \cite{BeutnerF22,BeutnerF22b,HsuSSB23}, which interpret verification as a game between universal and existential quantifiers.  
In fact, non-deterministic planning can be seen as a specialized form of turn-based games \cite{KissmannE09}.
Crucially, the size of an (explicit-state) game-based approach scales \emph{exponentially} in the number of quantified executions \cite{BeutnerF22}, making it impractical for larger instances. 
In contrast, the planning-based approach in this paper can describe the problem in a factored representation and let the planner determine how to best explore the state space.
Moreover, -- by utilizing partial observations -- our planning-based encoding is applicable to HyperLTL formulas with arbitrary quantifier prefixes, not only $\forall^*\exists^*$ formulas.

\section{Preliminaries}

For a set $X$, we write $X^+$ for the set of finite non-empty sequences over $X$, and $X^\omega$ for the set of infinite sequences. 

\subsection{Non-Deterministic Planning}

As a basic planning model, we use Qualitative Dec-POMDPs (QDec-POMDP), a general framework that encompasses multiple agents, non-deterministic effects, and partial observations \cite{BrafmanSZ13}.

\begin{definition}
	A QDec-POMDP is a tuple $\calG = (I, S, S_0, \allowbreak \{A_i\}, \delta, \{\Omega_i\}, \{\omega_i\}, G)$, where 
	$I$ is a finite set of agents; $S$ is a finite set of states; $S_0 \subseteq S$ is a set of initial states; 
	for each $i \in I$, $A_i$ is a finite set of actions, and we define $\vec{A} := \otimes_{i \in I} A_i$ as the set of joint actions;
	$\delta : S \times \vec{A} \to 2^S$ is a (non-deterministic) transition function;
	for each $i \in I$, $\Omega_i$ is a finite set of observations; $\omega_i : S \to \Omega_i$ defines $i$'s local observation; and $G \subseteq S$ is a set of goal states.
\end{definition}

We write $\{a_i\} \in \vec{A}$ for the joint action where each agent $i \in I$ chooses action $a_i$.
A \emph{local policy} for an agent $i \in \agents$, is a conditional plan that picks an action based on the history of observations, i.e., a function $f_i : \Omega_i^+ \to A_i$.
We can represent a policy $f_i$ as an (infinite) tree of degree $|\Omega_i|$ where nodes are labeled with elements from $A_i$.
A \emph{joint policy} $\{f_i\}$ assigns each agent $i \in I$ a local policy $f_i$.
A finite path $p \in S^+$ is \emph{compatible with $\{f_i\}$} if and only if \textbf{(1)} $p(0) \in S_0$ (i.e., the path starts in an initial state), and \textbf{(2)} for every $0 \leq j < |p| - 1$, $p(j+1) \in \delta(p(j), \{a_i\})$, where $a_i := f_i (\omega_i(p(0)) \cdots \omega_i(p(j)))$ for every $i \in \agents$.
That is, in the $j$th step, we compute the joint action $\{a_i\}$, where each $a_i$ is determined by policy $f_i$ based on the past observations made by $i$ on the prefix $p(0) \cdots p(j)$.
We write $\interactions(\{f_i\}) \subseteq S^+$ for the set of all $\{f_i\}$-compatible paths.

The objective of the agents is to reach a goal state in $G$. 
Following \citet{CimattiPRT03}, we distinguish between different levels of reachability.
A policy is a \emph{weak plan} if \emph{some} $p \in \interactions(\{f_i\})$ reaches a state in $G$, i.e., $\{f_i\}$ can reach the goal provided the non-determinism is resolved favorably. 
A policy is a \emph{strong plan} if there exists an $N \in \nat$ such that \emph{every} $p \in \interactions(\{f_i\})$ with $|p| \geq N$ reaches $G$, i.e., the goal is guaranteed to be reached, irrespective of non-deterministic outcomes.
Finally, a policy is a \emph{strong cyclic plan} if, for every $p \in \interactions(\{f_i\})$, either $p$ reaches a state in $G$ or there exists some $p' \in \interactions(\{f_i\})$ that extends $p$ (i.e., $p$ is a \emph{prefix} of $p'$) and $p'$ reaches a state in $G$.\footnote{A strong cyclic plan is one that always \emph{preserves the possibility of reaching a goal state}, i.e., at every point during the plan's execution, the non-determinism can be resolved favorably such that the goal is reached. Our definition expresses exactly this: either $p$ already reaches $G$ or some extension of $p$ \emph{can} reach the goal. This definition is equivalent to the one of \citet{CimattiPRT03}.
}

\subsection{Transition Systems}

\begin{figure*}[!t]
	\begin{subfigure}{0.1\linewidth}
		\centering
		\scalebox{0.9}{
		\begin{tikzpicture}
			\node[circle, inner sep=2pt,draw, thick, label=right:{\small $\{a\}$}] at (0,0) (n0) {\small $l_A$};
			
			\node[circle, inner sep=2pt, draw, thick, label=right:{\small $\emptyset$}] at (0,-2) (n1) {\small $l_B$};
			
			\draw[->, thick, bend left=15] (n0) to node[right,align=center] {\small $d_B$} (n1);
			\draw[->, thick, bend left=15] (n1) to node[left,align=center] {\small $d_A$} (n0);
			
			\draw [->,thick] (n0) edge[loop above] node[above,align=center] {\small $d_A$}(n0);
			\draw [->,thick] (n1) edge[loop below] node[below,align=center] {\small $d_B$}(n1);
			
			\draw[->, thick] (-0.4, 0.4) to[] (n0);	
			\end{tikzpicture}
		}
		
		\subcaption{}\label{fig:ts}
	\end{subfigure}%
	\begin{subfigure}{0.29\linewidth}
		\centering
		
		\scalebox{0.9}{
		\begin{tikzpicture}
			\node[regular polygon,regular polygon sides=7, inner sep=1pt,draw, thick] at (0,0) (n0) {\small $q_0$};
			
			\node[regular polygon,regular polygon sides=7, inner sep=1pt,draw, thick] at (1.5,-1.5) (n1) {\small $q_1$};
			
			\node[regular polygon,regular polygon sides=7, inner sep=1pt,draw, thick] at (1.5,1.5) (n2) {\small $q_2$};
			
			\node[regular polygon,regular polygon sides=7, inner sep=1pt,draw, thick,double] at (3,0) (n3) {\small $q_3$};
			
			\draw[->, thick] (n0) to[out=270,in=180] node[below,align=center,sloped] {\small $a_{\pi_1}$} (n1);
			\draw[->, thick] (n0) to[out=90,in=180] node[above,align=center,sloped] {\small $\neg a_{\pi_1}$} (n2);

			\draw[->, thick,bend left=20] (n1) to node[above,align=center,sloped,rotate=180] {\small $a_{\pi_2}  \land  \neg a_{\pi_1}$} (n2);
			\draw[->, thick,bend left=20] (n2) to node[above,align=center,sloped,rotate=180] {\small $\neg a_{\pi_2}  \land a_{\pi_1}$} (n1);
			
			\draw [->,thick] (n1) edge[loop below] node[below,align=center] {\small $a_{\pi_2} \land  a_{\pi_1}$}(n1);
			\draw [->,thick] (n2) edge[loop above] node[above,align=center,yshift=-1mm] {\small $\neg a_{\pi_2} \land \neg a_{\pi_1}$}(n2);
			
			\draw[->, thick] (n1) to[out=0,in=270] node[below,align=center,sloped] {\small $\neg a_{\pi_2}$} (n3);
			\draw[->, thick] (n2) to[out=0,in=90] node[above,align=center,sloped] {\small $a_{\pi_2}$} (n3);
			
			\draw [->,thick] (n3) edge[loop right] node[right,align=center] {\small $\top$}(n3);
			
			\draw[->, thick] (-0.4, 0.4) to[] (n0);	
		\end{tikzpicture}
	}
		
		\subcaption{}\label{fig:dsa}
	\end{subfigure}%
	\begin{subfigure}{0.61\linewidth}
		\centering
		\scalebox{0.89}{
		\begin{tikzpicture}
			\node[rectangle, inner sep=3pt,draw, thick,rounded corners=2pt] at (0,0) (n0) {\small $\langle l_A, l_A, q_0\rangle$};
			
			\draw[->, thick] (-0.5, 0.5) to[] (n0);	
			
			\node[circle,fill, inner sep=2pt,label={[xshift=1mm]left:{\small$d_A$}}] at (-3, -0.5) (c0) {};

			\node[rectangle, inner sep=3pt,draw, thick,rounded corners=2pt] at (-3, 1) (n1) {\small $\langle l_A, l_A, q_1\rangle$};
			\node[rectangle, inner sep=3pt,draw, thick,rounded corners=2pt] at (-3,-3) (n2) {\small $\langle l_B, l_A, q_1\rangle$};
			
			\draw[->, thick,bend left=20] (c0) to (n1);
			\draw[->, thick] (c0) to (n2);
			
			\draw[->, thick,bend left=20] (n1) to node[sloped] {\color{orange}$\blacktriangleleft$} (c0);

			\node[circle,fill, inner sep=2pt,label={[xshift=1mm]left:{\small$d_B$}}] at (-2, -2) (c4) {};
			\node[circle,fill, inner sep=2pt,label={[yshift=1mm]below:{\small$d_A$}}] at (0, -3) (c5) {};
			
			\draw[->, thick] (n2) to node[sloped,rotate=180] {\color{orange}$\blacktriangleleft$} (c4);
			\draw[->, thick] (n2) to (c5);

			\node[rectangle, inner sep=3pt,draw, thick,rounded corners=2pt] at (0, -1) (nl3) {\small $\langle l_A, l_B, q_2\rangle$};
			\node[rectangle, inner sep=3pt,draw, thick,rounded corners=2pt] at (0,-2) (nl4) {\small $\langle l_B, l_B, q_2\rangle$};

			\draw[->, thick] (c4) to[out=90,in=190] (nl3);
			\draw[->, thick] (c4) to[out=315,in=225]  (nl4);

			\draw[->, thick] (nl3) to node[sloped] {\color{orange}$\blacktriangleleft$} (c0);
			
			\draw[->, thick] (nl4) to node[sloped] {\color{orange}$\blacktriangleleft$} (c4);
			\draw[->, thick] (nl4) to (c5);

			\node[rectangle, inner sep=3pt,draw, thick,rounded corners=2pt] at (2, -2) (nl5) {\small $\langle l_A, l_A, q_2\rangle$};
			\node[rectangle, inner sep=3pt,draw, thick,rounded corners=2pt] at (2,-3) (nl6) {\small $\langle l_B, l_A, q_2\rangle$};

			\draw[->, thick] (c5) to (nl5);
			\draw[->, thick] (c5) to (nl6);

			\node[circle,fill, inner sep=2pt,label={[yshift=1mm]below:{\small$d_B$}}] at (1.5, 0) (c1) {};
			
			\draw[->, thick] (n0) to node[sloped] {\color{orange}$\blacktriangleleft$} (c0);
			\draw[->, thick] (n0) to (c1);

			\node[rectangle, inner sep=3pt,draw, thick,rounded corners=2pt] at (3.5,0) (n3) {\small $\langle l_A, l_B, q_1\rangle$};
			\node[rectangle, inner sep=3pt,draw, thick,rounded corners=2pt] at (6,0) (n4) {\small $\langle l_B, l_B, q_1\rangle$};

			\draw[->, thick] (c1) to (n3);
			\draw[->, thick,bend left] (c1) to (n4);

			\node[circle,fill, inner sep=2pt,label={[yshift=1mm]below:{\small$d_A$}}] at (4.75, -2) (c6) {};
			\node[circle,fill, inner sep=2pt,label={[yshift=1mm]below:{\small$d_B$}}] at (4.75, -3) (c7) {};
			
			\draw[->, thick] (n3) to (c6);
			\draw[->, thick] (n3) to (c7);
			\draw[->, thick] (n4) to (c6);
			\draw[->, thick] (n4) to (c7);
			
			\draw[->, thick] (nl3) to (c1);
			
			\draw[->, thick] (nl5) to (c6);
			\draw[->, thick] (nl5) to (c7);
			\draw[->, thick] (nl6) to (c6);
			\draw[->, thick] (nl6) to (c7);
			
			\draw[->, thick] (n1) to[out=0,in=130] (c1);

			\draw[red, very thick, dotted, rounded corners=5pt] (6,-0.8) rectangle (8, -4.2);

			\node[rectangle, inner sep=3pt,draw, thick,rounded corners=2pt] at (7,-1.2) (ne1) {\small $\langle l_A, l_A, q_3\rangle$};
			\node[rectangle, inner sep=3pt,draw, thick,rounded corners=2pt] at (7,-2) (ne2) {\small $\langle l_B, l_A, q_3\rangle$};
			\node[rectangle, inner sep=3pt,draw, thick,rounded corners=2pt] at (7,-3) (ne3) {\small $\langle l_A, l_B, q_3\rangle$};
			\node[rectangle, inner sep=3pt,draw, thick,rounded corners=2pt] at (7,-3.8) (ne4) {\small $\langle l_B, l_B, q_3\rangle$};

			\draw[->, thick] (c6) to (ne1);
			\draw[->, thick] (c6) to (ne2);
			\draw[->, thick] (c7) to (ne3);
			\draw[->, thick] (c7) to (ne4);
		\end{tikzpicture}
	}
	\vspace{-2mm}
		\subcaption{}\label{fig:instance}
	\end{subfigure}

	\caption{In \Cref{fig:ts}, we depict a transition system $\calT$ over $\ap = \{a\}$ using directions $\directions = \{d_A, d_B\}$. 
		 In \Cref{fig:dsa}, we give a DSA for the LTL body $\ltlG (a_{\pi_1} \leftrightarrow \ltlN a_{\pi_2})$ from \Cref{ex:running}, where we mark state $q_3$ as losing.
		In \Cref{fig:instance}, we sketch the QDec-POMDP $\calG_{\calT, \varphi}^\mathit{safe}$ constructed for the verification instance in \Cref{ex:running} (see \Cref{ex:planning} for details).
		}
\end{figure*}

We assume that $\ap$ is a fixed set of \emph{atomic propositions}. 
As the basic system model, we use finite-state transition systems (TS), which are tuples $\calT = (L, l_\mathit{init}, \directions, \kappa, \ell)$ where $L$ is a finite set of locations (we use ``locations'' to distinguish them from the ``states'' in a planning domain), $l_\mathit{init} \in L$ is an initial location, $\directions$ is a finite set of \emph{directions}, $\kappa : L \times \directions \to L$ is the transition function, and $\ell : L \to 2^\ap$ labels each location with an evaluation of the APs. 
Note that we use explicit directions in order to uniquely identify successor locations; we can easily model any transition function $L \to (2^L \setminus \{\emptyset\})$ using sufficiently many directions.
A path in $\calT$ is an infinite sequence $p \in L^\omega$ such that \textbf{(1)} $p(0) = l_\mathit{init}$, and \textbf{(2)} for every $j \in \nat$, there exists some $d \in \directions$ s.t.~$p(j+1) = \kappa(p(j), d)$.
We define $\paths(\calT) \subseteq L^\omega$ as the set of all paths in $\calT$.

\subsection{HyperLTL}

As the basic specification language for hyperproperties, we use HyperLTL, an extension of LTL with explicit quantification over (execution) paths \cite{ClarksonFKMRS14}.
Let $\pathVars = \{\pi, \pi', \ldots\}$ be a set of \emph{path variables}.
HyperLTL formulas are generated by the following grammar
\begin{align*}
	\psi &:= a_\pi \mid \psi \land \psi \mid \neg \psi \mid \ltlN \psi  \mid \psi \ltlU \psi \\
	\varphi &:=\quant \pi \ldot \varphi \mid \psi
\end{align*}
where $a \in \ap$, $\pi \in \pathVars$, and $\quant \in \{\forall, \exists\}$ is a quantifier.
We use the usual derived boolean constants and connectives $\mathit{true}, \mathit{false}, \lor, \to, \leftrightarrow$, and the temporal operators  \emph{eventually} ($\ltlF \psi := \mathit{true} \ltlU \psi$), and \emph{globally} ($\ltlG \psi := \neg \ltlF \neg \psi$).

Given a TS $\calT = (L, l_\mathit{init}, \directions, \kappa, \ell)$, we evaluate a HyperLTL formula in the context of a path assignment $\Pi : \pathVars \rightharpoonup L^\omega$ (mapping path variables to paths) and $j \in \nat$ as follows:
\begin{align*}
	\Pi, j &\models_\calT a_\pi &\text{iff } \quad&a \in \ell\big( \Pi(\pi)(j) \big)\\
	\Pi, j&\models_\calT \psi_1 \land \psi_2 \!\!\!\!\!\!\!\!&\text{iff } \quad &\Pi,j \models_\calT \psi_1 \text{ and } \Pi, j \models_\calT \psi_2\\
	\Pi, j &\models_\calT \neg \psi &\text{iff } \quad &\Pi, j \not\models_\calT \psi\\
	\Pi, j &\models_\calT \ltlN \psi &\text{iff } \quad &\Pi, j+1 \models_\calT \psi\\
	\Pi, j &\models_\calT \psi_1 \ltlU \psi_2 \!\!\! &\text{iff } \quad &\exists k \geq j\ldot \Pi, k \models_\calT \psi_2 \text{ and } \\
	&  \quad\quad\quad\quad\quad\quad\quad\quad\quad\quad \forall j \leq l < k\ldot \Pi, l \models_\calT \psi_1 \span \span\\
	\Pi, j &\models_\calT \quant \pi \ldot \varphi \!\!\!\!\!\!\!\! &\text{iff } \quad&\quant p \in \paths(\calT)  \ldot \Pi[\pi \mapsto p], j \! \models_\calT \! \varphi
\end{align*}
The atomic formula $a_\pi$ holds whenever $a$ holds in the current position $j$ on the path bound to $\pi$ (as given by $\ell$).
Boolean and temporal operators are evaluated as expected by updating the current evaluation position $j$, and quantification adds paths to $\Pi$.
We refer to \citet{Finkbeiner23} for details.
We say $\calT$ satisfies $\varphi$, written $\calT \models \varphi$, if $\emptyset, 0 \models_\calT \varphi$, where $\emptyset$ denotes the path assignment with empty domain.

\begin{example}\label{ex:running}
	As an example, consider the transition system $\calT$ in \Cref{fig:ts} and the HyperLTL formula $\varphi := \forall \pi_1\ldot \exists \pi_2\ldot \ltlG (a_{\pi_1} \leftrightarrow \ltlN a_{\pi_2})$.
	The formula expresses that for any path $\pi_1$, there exists some path $\pi_2$ that mirrors $\pi_1$ with a one-step delay. 
	It is easy to see that $\calT \models \varphi$.
\end{example}

\section{Verification via Planning}\label{sec:reduction}

We want to automatically verify that $\calT \models \varphi$.
To this end, we present a novel encoding into a planning problem, thus leveraging the extensive research and tool development within the planning community. 
As already outlined in \Cref{sec:overview}, our main idea is to interpret existential quantification in $\varphi$ as being resolved by an agent that picks transitions in $\calT$ to construct a path. 
Throughout this section, we assume that $\calT = (L, l_\mathit{init}, \directions, \kappa, \ell)$ is the fixed TS and $\varphi = \quant_1 \pi_1 \ldots \quant_n \pi_n\ldot \psi$ is the fixed HyperLTL formula over path variables $\pi_1, \ldots, \pi_n$.
We distinguish between temporal reachability (\Cref{sec:sub:reach}) and temporal safety (\Cref{sec:sub:safety}).

\subsection{Encoding for Temporal Reachability}\label{sec:sub:reach}

We first consider the case in which the LTL body of $\varphi$ expresses a \emph{temporal reachability property} (i.e., ``something good happens eventually'').

\paragraph{DFA.}

A deterministic finite automaton (DFA) over some alphabet $\Sigma$ is a tuple $\calA = (Q, q_0, \varrho, F)$ where $Q$ is a finite set of states, $q_0 \in Q$ is an initial state, $\varrho : Q \times \Sigma \to Q$ is a deterministic transition function, and $F \subseteq Q$ is a set of marked states.
An infinite word $u \in \Sigma^\omega$ is accepted by $\calA$ if the unique run \emph{eventually} reaches some state in $F$.
We say $\varphi$ is a \emph{temporal reachability formula} if $\psi$ (the LTL body of $\varphi$) is recognized by a DFA, i.e., some DFA $\calA_\psi = (Q_\psi, q_{0, \psi}, \varrho_\psi, F_\psi)$ over alphabet $2^{\ap \times \{\pi_1, \ldots, \pi_n\}}$  accepts exactly those infinite words that satisfy $\psi$ (recall that the atoms in $\psi$ have the form $a_{\pi_i} \in \ap \times \{\pi_1, \ldots, \pi_n\}$).

\paragraph{Planning Encoding.}
We write $\pathVars_\exists := \{\pi_i \mid \quant_i = \exists\}$ for the set of existentially quantified path variables in $\varphi$, and, analogously, $\pathVars_\forall$ for the set of universally quantified ones.

\begin{definition}\label{def:main}
	Define the QDec-POMDP $\calG_{\calT, \varphi}^\mathit{reach}$ as
	\begin{align*}
		\calG_{\calT, \varphi}^\mathit{reach} := \big(I, S, S_0,\allowbreak \{A_i\}, \delta, \{\Omega_i\}, \allowbreak\{\omega_i\},\allowbreak G\big),
	\end{align*}
	where
	\begin{align*}
		I &:= \big\{i \mid \pi_i\in \pathVars_\exists\big\},\\
		S &:= \big\{\langle l_1, \ldots, l_n, q \rangle \mid q \in Q_\psi \land l_1, \ldots, l_n \in L\big\},\\
		S_0 &:= \big\{\langle l_\mathit{init}, \ldots, l_\mathit{init}, q_{0, \psi} \rangle\big\},\\
		A_i &:= \directions,\\
		\Omega_i &:= \big\{\langle l_1, \ldots, l_i \rangle \mid l_1, \ldots, l_i \in L\big\},\\
		G &:= \big\{\langle l_1, \ldots, l_n, q \rangle \mid q \in F_\psi \land l_1, \ldots, l_n \in L\big\},
	\end{align*}
	the transition function $\delta$ is defined as
	\begin{align*}
						\delta\big(\langle &l_1, \ldots, l_n, q \rangle, \{d_i \in \directions\}_{\pi_i \in \pathVars_\exists}\big) := \\
		&\Big\{ \langle \kappa(l_1, d_1), \ldots, \kappa(l_n, d_n), q' \rangle \mid \{d_i \in \directions\}_{\pi_i \in \pathVars_\forall} \, \land \\
		&\textstyle\quad q' = \varrho_\psi\big(q, \bigcup_{i=1}^n \big\{ (a, \pi_i) \mid a \in \ell(l_i)  \big\} \big) \Big\},
	\end{align*} 
	and the observation functions $\{\omega_i\}$ are defined as
	\begin{align*}
		\omega_i(\langle l_1, \ldots, l_n, q \rangle) := \langle l_1, \ldots, l_i \rangle.
	\end{align*}
\end{definition}

Let us step through this definition.
As already hinted in \Cref{sec:overview}, we add one agent $i$ for each existentially qualified path variable $\pi_i \in \pathVars_\exists$.
Each state has the form $\langle l_1, \ldots, l_n, q \rangle$ and tracks a current location for each of the paths (where $l_i \in L$ is the current location for path $\pi_i$), and the current state of $\calA_\psi$. 
Intuitively, the planning problem simulates $\pi_1, \ldots, \pi_n$ by keeping track of their current location ($l_1, \ldots, l_n$), and letting the actions chosen by the agents (for existentially quantified paths) or the non-determinism (for universally quantified paths) fix the next location.
We start each $\pi_i$ in the initial location $l_\mathit{init}$ and start the run of $\calA_\psi$ in the initial state $q_{0, \psi}$. 
The actions of each agent then directly correspond to directions in $\calT$. 
When given a joint action $\{d_i\}_{\pi_i\in \pathVars_\exists}$ (i.e., a direction for each existentially quantified path), the transition function considers \emph{all possible} directions for universally quantified paths ($\{d_i\}_{\pi_i\in \pathVars_\forall}$) and updates each location $l_i$ based on the direction $d_i$.
Existentially quantified paths thus follow the direction selected by the respective agent, and universally quantified ones follow a non-deterministically chosen direction. 
In each step, we also update the state of $\calA_\psi$:
For each $1 \leq i \leq n$, we collect all APs that hold in the current location ($\ell(l_i)$) and index them with $\pi_i$, thus obtaining a letter in $2^{\ap \times \{\pi_1, \ldots, \pi_n\}}$ which we feed to the transition function of $\calA_\psi$. 
As argued in \Cref{sec:overview}, each agent $i$ controlling $\pi_i \in \pathVars_\exists$ may only observe the paths $\pi_1, \ldots, \pi_i$ quantified \emph{before} $\pi_i$, so the observation set $\Omega_i$ of agent $i$ consist exactly of tuples of the form $\langle l_1, \ldots, l_i \rangle$ and the observation function $\omega_i$ projects each state to the observable locations. 
Lastly, the goal consists of all states in which the automaton has reached one of $\calA_\psi$'s marked states.

\begin{restatable}{theorem}{th}\label{theo:th1}
Assume $\varphi$ is a temporal reachability formula.
If $\calG_{\calT, \varphi}^\mathit{reach}$ admits a strong plan, then $\calT \models \varphi$. 
\end{restatable}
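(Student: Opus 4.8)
The plan is to extract from a strong plan a family of Skolem functions for the existentially quantified path variables and to show that these witness $\emptyset, 0 \models_\calT \varphi$. First I would fix a joint policy $\{f_i\}$ that is a strong plan and establish a tight correspondence between the $\{f_i\}$-compatible paths of $\calG_{\calT,\varphi}^\mathit{reach}$ and tuples of paths in $\calT$. Concretely, I would prove by induction on $j$ the invariant that along any compatible path $p$ with $p(j) = \langle l_1^j, \ldots, l_n^j, q^j\rangle$, each sequence $l_i^0 l_i^1 \cdots$ is (a prefix of) a path of $\calT$ starting in $l_\mathit{init}$, and that $q^j$ is exactly the state of $\calA_\psi$ after reading the length-$j$ prefix of the combined trace whose $k$-th letter is $\bigcup_{i=1}^{n}\{(a,\pi_i)\mid a\in\ell(l_i^k)\}$. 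This follows directly from the definitions of $S_0$, $\delta$, and $\kappa$.

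Next I would turn the policy into genuine Skolem functions. Given any assignment of infinite paths $\{p_i\}_{\pi_i \in \pathVars_\forall}$ to the universally quantified variables, I would define, by simultaneous induction over the time step, infinite paths $\{p_i\}_{\pi_i \in \pathVars_\exists}$ for the existential variables: at each step the non-determinism of $\delta$ is resolved to follow the given universal paths (such a resolving direction exists by the definition of a path), while the existential directions are dictated by $\{f_i\}$. Since each $f_i$ is total on $\Omega_i^+$ and every direction yields a valid $\kappa$-successor, this produces well-defined infinite paths whose induced state sequence is an $\{f_i\}$-compatible path. The crucial point is that $\omega_i$ exposes to agent $i$ only the locations of $\pi_1, \ldots, \pi_i$; hence the witness path for an existential $\pi_i$ depends only on the paths quantified \emph{before} it in the prefix and never on a later path. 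This is precisely the dependency discipline demanded by the HyperLTL semantics of $\quant_1 \pi_1 \cdots \quant_n \pi_n$.

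With the witnesses in hand, I would invoke the strong-plan property: there is an $N$ such that every compatible path of length at least $N$ reaches $G$. For any choice of the universal paths, the induced compatible path therefore reaches a state with $q^j \in F_\psi$ for some $j \le N$; by the invariant the run of $\calA_\psi$ on the combined infinite trace reaches an accepting state, so the trace is accepted and thus $\Pi, 0 \models_\calT \psi$ for the full assignment $\Pi$ (recall $\calA_\psi$ accepts exactly the words satisfying $\psi$, and DFA acceptance only requires eventually reaching $F_\psi$). Finally I would discharge the quantifier prefix by an outside-in induction: universal quantifiers are handled because the argument holds for \emph{every} universal path, and existential quantifiers are discharged by supplying the policy-induced witness, whose legality (dependence only on earlier paths) was secured in the previous step. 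This yields $\emptyset, 0 \models_\calT \varphi$, i.e.\ $\calT \models \varphi$.

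I expect the main obstacle to be the second step: carefully matching the synchronous, step-wise dependency induced by the observation functions $\omega_i$ with the semantic Skolem dependencies of the quantifier alternation, and verifying that the induction over the prefix remains sound even when existential variables depend on earlier existential variables, which are themselves functions of the universal paths. The automaton invariant and the use of the strong-plan bound are then routine inductions once this correspondence is in place.
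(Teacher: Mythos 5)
Your proposal is correct and follows essentially the same route as the paper's proof: extract Skolem witnesses for the existential variables by querying the local policies step-wise on observation prefixes (the observation functions $\omega_i$ guaranteeing the correct dependency discipline), show that any permitted path combination induces an $\{f_i\}$-compatible path of $\calG_{\calT, \varphi}^\mathit{reach}$ whose automaton component tracks the run of $\calA_\psi$, and then use the strong-plan bound $N$ to conclude that this run reaches $F_\psi$, hence $\psi$ holds and standard Skolemization yields $\calT \models \varphi$. Your framing as a simultaneous induction given the universal paths, plus an outside-in induction on the quantifier prefix, is just a slightly more explicit rendering of the paper's appeal to winning families of Skolem functions.
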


\begin{proof}[Proof Sketch]
	We can use the policies in a strong plan for $\calG_{\calT, \varphi}^\mathit{reach}$ to construct \emph{Skolem functions} for existentially quantified paths in $\varphi$.
	The full proof is provided in \ifFull{\Cref{app:proof}}{the full version \cite{fullVersion}}.
\end{proof}

\begin{remark}
	In the HyperLTL semantics, an existentially quantified path $\pi_i$ is chosen based on \emph{all} paths quantified before $\pi_i$. 
	In our encoding, we abstract this selection with a step-wise action selection, so the agents can construct existentially quantified paths based only on the \emph{prefixes} of previously quantified paths.
	This lack of information leads to incompleteness, i.e., in some cases $\calT \models \varphi$, but $\calG_{\calT, \varphi}^\mathit{reach}$ does \emph{not} admit a strong plan.
	We can counteract this incompleteness by providing the agents with information about the future behavior of universally quantified executions using prophecies \cite{BeutnerF22}. 
\end{remark}

\subsection{Encoding for Temporal Safety}\label{sec:sub:safety}

We can also handle the case in which the LTL body of $\varphi$ denotes a \emph{temporal safety property}.

\paragraph{DSA.}

A deterministic safety automaton (DSA) over alphabet $\Sigma$ is a DFA-like tuple $\calA = (Q, q_0, \varrho, F)$.
An infinite word $u \in \Sigma^\omega$ is accepted by $\calA$ if the unique run \emph{never} visits a state in $F$.
We say $\varphi$ is a \emph{temporal safety formula} if $\psi$ is recognized by a DSA $\calA_\psi = (Q_\psi, q_{0, \psi}, \varrho_\psi, F_\psi)$.
For example, the \ref{eq:gni} formula in \Cref{sec:intro} and the HyperLTL formula from \Cref{ex:running} are temporal safety formulas.

\paragraph{Planning Encoding.}

Different from reachability properties, safety properties reason about infinite executions (and not only finite prefixes thereof).
To encode this as a planning problem, we add special sink states $s_\mathit{win}$ and $s_\mathit{lose}$, and mark $s_\mathit{win}$ as the unique goal state. 
From any state $\langle l_1, \ldots, l_n, q \rangle$ where $q \in F_\psi$, we then deterministically move to $s_\mathit{lose}$.
From any state $\langle l_1, \ldots, l_n, q \rangle$ where $q \not\in F_\psi$, we extend the transitions from \Cref{def:main} with an additional non-deterministic transition to $s_\mathit{win}$.
The agents can thus never ensure a visit to $s_\mathit{win}$, but a strong \emph{cyclic} plan guarantees that we never visit a losing state in $F_\psi$.
We denote the resulting QDec-POMPD with $\calG_{\calT, \varphi}^\mathit{safe}$; a full description can be found in \ifFull{\Cref{app:safety}}{the full version \cite{fullVersion}}.

\begin{restatable}{theorem}{thh}\label{theo:th2}
	Assume $\varphi$ is a temporal safety formula.
	If $\calG_{\calT, \varphi}^\mathit{safe}$ admits a strong cyclic plan, then $\calT \models \varphi$. 
\end{restatable}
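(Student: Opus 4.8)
The plan is to follow the same Skolemization strategy sketched for \Cref{theo:th1}, but with two adaptations dictated by the safety setting: the ``good behaviour'' is now the \emph{perpetual avoidance} of the losing DSA states $F_\psi$ rather than a one-shot reachability goal, and the relevant notion of plan is a strong \emph{cyclic} one. I would therefore first translate the strong-cyclic guarantee on $\calG_{\calT, \varphi}^\mathit{safe}$ into a purely automaton-theoretic invariant, and then reuse that invariant to build Skolem functions witnessing $\calT \models \varphi$.

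First I would prove a characterization lemma: a joint policy $\{f_i\}$ is a strong cyclic plan for $\calG_{\calT, \varphi}^\mathit{safe}$ iff no compatible path that never takes the auxiliary $s_\mathit{win}$-transition ever enters a state $\langle l_1, \dots, l_n, q\rangle$ with $q \in F_\psi$. Only the ``only if'' direction is actually needed, and it is a short argument by contradiction: if such a ``simulation'' path reached a state with $q \in F_\psi$, then, by construction, its unique successor is the dead sink $s_\mathit{lose}$, from which $s_\mathit{win}$ is unreachable; hence the prefix up to that losing state is a compatible path that neither reaches $G = \{s_\mathit{win}\}$ nor has any extension reaching $G$, contradicting strong cyclicity. (Conversely, from every non-losing state the extra nondeterministic edge to $s_\mathit{win}$ keeps the goal reachable, which is exactly why strong cyclicity is the right notion here.)

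Next I would extract Skolem functions. Fixing an arbitrary assignment of the universally quantified paths $\{p_k\}_{\pi_k \in \pathVars_\forall}$, I would resolve the nondeterminism of $\delta$ so that each universal location $l_k^{(j)}$ follows $p_k$ (always possible, since $\delta$ ranges over all directions) and let each existential agent $i$ choose its direction via $f_i$ applied to the observation prefix. This defines, step by step, one compatible simulation path and hence full paths $p_i = l_i^{(0)} l_i^{(1)} \cdots$ for the existential variables. The crux — and the step I expect to be the main obstacle — is showing that these $p_i$ constitute \emph{legitimate} Skolem functions for the quantifier prefix, i.e.\ that $p_i$ depends only on the universal paths quantified \emph{before} $\pi_i$. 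This is precisely where partial observability does the work: since $\omega_i$ exposes only $\langle l_1, \dots, l_i\rangle$, agent $i$'s choices are independent of $\pi_{i+1}, \dots, \pi_n$, and an induction on the quantifier order (invoking the inductive hypothesis for the earlier existential paths) shows $p_i$ is a function of $\{p_k : \pi_k \in \pathVars_\forall,\, k < i\}$ alone. Formalizing the simultaneous, step-synchronized unrolling while keeping this causal dependency bookkeeping correct is the delicate part.

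Finally I would close the loop. For the resulting assignment $\Pi = [\pi_1 \mapsto p_1, \dots, \pi_n \mapsto p_n]$, the letter read by $\calA_\psi$ at step $j$ is exactly $\bigcup_{i} \{(a, \pi_i) \mid a \in \ell(l_i^{(j)})\}$, i.e.\ the trace of $\Pi$, and the corresponding automaton run is the sequence $q^{(0)} q^{(1)} \cdots$ appearing along the simulation path. By the characterization lemma this run never meets $F_\psi$, so $\calA_\psi$ accepts and therefore $\Pi, 0 \models_\calT \psi$. Since the universal assignment was arbitrary and the existential paths were produced by prefix-respecting Skolem functions, unfolding the HyperLTL semantics along the quantifier prefix yields $\emptyset, 0 \models_\calT \varphi$, that is, $\calT \models \varphi$.
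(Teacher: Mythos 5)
Your proposal is correct and takes essentially the same route as the paper's proof: the same prefix-wise extraction of Skolem functions from the local policies used for \Cref{theo:th1}, combined with the observation that under a strong cyclic plan no compatible path can ever enter a state $\langle l_1, \ldots, l_n, q\rangle$ with $q \in F_\psi$ (since its only continuation is the dead sink $s_\mathit{lose}$, contradicting strong cyclicity), so the induced DSA run avoids $F_\psi$ and the body $\psi$ is satisfied. Your explicit characterization lemma and the causality bookkeeping for the observation functions merely spell out in detail what the paper's appendix proof states in a few lines.
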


\begin{example}\label{ex:planning}
	We consider the verification instance $(\calT, \varphi)$ from \Cref{ex:running}.
	In \Cref{fig:dsa}, we depict a DSA for the LTL body of $\varphi$.
	In \Cref{fig:instance}, we sketch the resulting QDec-POMDP $\calG_{\calT, \varphi}^\mathit{safe}$. 
	Each action from $\directions = \{d_A, d_B\}$ for agent $2$ (controlling the existentially quantified path $\pi_2$) non-deterministically leads to two successor states, which we visualize using immediate decision nodes labeled by directions.
	For example, in the initial state $\langle l_A, l_A, q_0\rangle$, the action (aka.~direction) $d_A$ non-deterministically leads to states $\langle l_A, l_A, q_1\rangle$ and $\langle l_B, l_A, q_1\rangle$. 
	For simplicity, we omit $s_\mathit{win}$ and  $s_\mathit{lose}$: All states where the automaton has reached the losing state $q_3$ (surrounded by the red dashed box) transition to $s_\mathit{lose}$, and all other states have an additional non-deterministic transition to $s_\mathit{win}$.
	The QDec-POMDP $\calG_{\calT, \varphi}^\mathit{safe}$ admits a strong cyclic plan by always following the action marked by the orange arrow, proving that $\calT \models \varphi$. 
\end{example}

\subsection{Encoding for Full HyperLTL}

Our construction can also be extended to handle \emph{full} HyperLTL by reducing to planning problems with temporal goals specified in LTL \cite{CamachoTMBM17,CamachoM19}. 
In this paper, we restrict our construction  to the case of reachability and safety properties as \textbf{(1)} this suffices for almost all properties of interest, and \textbf{(2)} it allows us to employ the multiplicity of automated planners that yield strong (cyclic) plans for non-temporal objectives.

\subsection{Factored Representation}

In our construction, we used an explicit-state (flat) representation of the problem with $|L|^n \cdot |Q_\psi|$ many states. 
In practice, many planning formats (e.g., STRIPS, PDDL, SAS) allow for a \emph{factored} description of the state space, using roughly $n \cdot |L| + |Q_\psi|$ many fluents that track the current location of each path \emph{individually}.
For example, the QDec-POMDP from \Cref{fig:instance} can be represented compactly by tracking the locations of $\pi_1$ and  $\pi_2$ individually. 
The possibility of using a factored representation is a core motivation for using planning tools for HyperLTL verification.

\subsection{Classical, FOND, and POND Planning}\label{sec:sub:easier}

In general, our encoding yields a planning problem that combines multiple agents, non-determinism, and partial observations. 
In many situations, however, the resulting problem does not require all these features:
\textbf{(1)} For $\exists^*$ properties, the planning problem is classical, i.e., it consists of a single agent (controlling all paths), deterministic actions, and full information. 
\textbf{(2)} For $\forall^*\exists^*$ properties (e.g., \ref{eq:gni}), the problem involves a single agent (controlling all existentially quantified paths) acting under full information (FOND planning).
\textbf{(3)} For $\forall^*\exists^*\forall^*$ properties, the problem involves a single agent acting under partial observations (POND planning).

\section{Implementation and Experiments}\label{sec:implementation}

\begin{table}[!t]
	\centering
	
		\small
		\renewcommand{\arraystretch}{1.2}
		\begin{tabular}{@{\hspace{0mm}}l@{\hspace{2mm}}l@{\hspace{2mm}}l@{\hspace{2mm}}c@{\hspace{1mm}}c@{\hspace{2mm}}c@{\hspace{1mm}}c@{\hspace{0mm}}}
			\toprule 
			& & & \multicolumn{2}{c}{$\boldsymbol{\exists}\boldsymbol{\exists}$} & \multicolumn{2}{c}{$\boldsymbol{\forall}\boldsymbol{\exists}$} \\
			\cmidrule[0.6pt](l{0mm}r{2mm}){4-5}
			\cmidrule[0.6pt](l{-1mm}r{0mm}){6-7}
			\textbf{Model} & \textbf{Size PG} & \textbf{Size PDDL} & $\boldsymbol{t}_\texttt{PG}$ & $\boldsymbol{t}_\tool{}$ & $\boldsymbol{t}_\texttt{PG}$ & $\boldsymbol{t}_\tool{}$  \\
			\midrule[0.6pt]
			\textsc{bakery3} & 31016.4 & 8.0/75.7 & 4.8 & \textbf{1.2} & 4.5 & \textbf{0.9} \\
			\textsc{bakery5} & 614.6 & 7.7/19.0 & 0.6 & \textbf{0.5}& \textbf{0.7} & 1.1 \\
			\textsc{mutation} & 1807.5 & 8.5/9.8 & 0.7 & \textbf{0.3} & 0.6 & \textbf{0.4} \\
			\textsc{ni\_c} & 37.5 & 7.7/13.5 & \textbf{0.2} & 0.3 &  0.5 & \textbf{0.4} \\
			\textsc{ni\_i} & 948.3 & 8.5/104.3 & 0.6 & \textbf{0.4} & \textbf{0.7} & 1.3 \\
			\textsc{nrp\_c} & 688.3 & 7.7/23.0 & \textbf{0.5} & \textbf{0.5} & 0.5 & \textbf{0.4} \\
			\textsc{nrp\_i} & 1018.6 & 7.7/22.2 & 0.5 & \textbf{0.3} & \textbf{0.4} & \textbf{0.4} \\
			\textsc{snark\_con} & 105854.7 & 7.7/192.1 & 19.2 &\textbf{6.9} & 21.0 & \textbf{5.1} \\
			\textsc{snark\_seq} & 17415.6 & 8.0/32.4 & 3.3 & \textbf{0.6} & 5.6 & \textbf{2.4} \\
			\bottomrule
		\end{tabular}
		
	\caption{We compare \tool{} with a PG-based encoding. We list the size of the PG, the number of actions/objects in \tool{}'s PDDL encoding, and the verification times in seconds (averaged over multiple runs).
	}\label{tab:res1}

\end{table}

We have implemented our encoding for $\forall^*\exists^*$ HyperLTL formulas in a prototype called \tool{}.
Our tool produces FOND planning instance in an extension of PDDL, featuring  \texttt{(oneof p1 ... pn)} expressions in action effects; A format widely supported by many FOND planners. 
We compare \tool{} against the parity-game (PG) based encoding for $\forall^*\exists^*$ properties \cite{BeutnerF22}.
For our experiments, we collect the 10 NuSMV models from \citet{HsuSB21} and generate random formulas of the form $\exists \pi\ldot \exists \pi' \ldot \ltlF \psi$ and $\forall \pi\ldot \exists \pi' \ldot \ltlG \psi$ where $\psi$ is a temporal-operator-free formula. 
As remarked in \Cref{sec:sub:easier}, for the $\exists\exists$ properties, \tool{} produces classical planning problems, which we solve using \texttt{Scorpion} \cite{SeippH18}.
The $\forall\exists$ properties yield FOND planning problems, which we solve using \texttt{MyND} \cite{MattmullerOHB10}. 
In \Cref{tab:res1}, we report the average size of the encodings, as well as the time taken for the $\exists\exists$ and $\forall\exists$ properties.
As noted in \Cref{sec:relatedWork}, the size of the PG is exponential in the number of paths, whereas the PDDL description is \emph{factored} and delegates the exploration to the planner.
Consequently, we observe that off-the-shelf planners perform competitively compared to explicit-state PG solvers.

\section{Conclusion}

We have presented a novel application of non-deterministic planning: the verification of hyperproperties. 
Our encoding is applicable to HyperLTL formulas with arbitrary quantifier prefixes, and our preliminary experiments show that off-the-shelf planners constitute a competitive alternative to explicit-state games. 
Moreover, further development of non-deterministic planners (for which our work provides additional incentives) directly improves our verification pipeline.  

\section*{Acknowledgments}

This work was supported by the European Research Council (ERC) Grant HYPER (101055412), and by the German Research Foundation (DFG) as part of TRR 248 (389792660).

\bibliography{references}

\iffullversion

\newpage

\appendix

\section{Correctness Proof}\label{app:proof}

\th*
\begin{proof}
	Let $\varphi = \quant_1 \pi_1\ldots \quant_n \pi_n\ldot \psi$ be the fixed HyperLTL formula and assume that $\calG_{\calT, \varphi}^\mathit{reach}$ admits a strong plan.
	
	\paragraph{Skolem Functions for HyperLTL.}
	To show that $\calT \models \varphi$, we construct a \emph{Skolem function} $\xi_i : \paths(\calT)^{i-1} \to \paths(\calT)$ for each $i$ where $\quant_i = \exists$.
	Intuitively,  $\xi_i$ provides a concrete witness path for $\pi_i$ when given the $i-1$ paths assigned to path variables $\pi_1, \ldots, \pi_{i-1}$ \cite{shoenfield2018mathematical}.
	Given a family of Skolem functions $\{\xi_i\}_{\pi_i \in \pathVars_\exists}$ (one for each $\pi_i \in \pathVars_\exists$), we say a path combination $(p_1, \ldots, p_n) \in \paths(\calT)^n$ \emph{is permitted by $\{\xi_i\}_{\pi_i \in \pathVars_\exists}$}, iff $p_i = \xi_i(p_1, \ldots, p_{i-1})$ whenever $\pi_i \in \pathVars_\exists$.
	That is, for all existentially quantified paths, we use the Skolem function but choose arbitrary paths for all universally quantified ones. 
	We say $\{\xi_i\}_{\pi_i \in \pathVars_\exists}$ is \emph{winning} if, for all path combinations $(p_1, \ldots, p_n)$ that are  permitted by $\{\xi_i\}_{\pi_i \in \pathVars_\exists}$, we have $[\pi_1 \mapsto p_1, \ldots, \pi_n \mapsto p_n], 0 \models_\calT \psi$, i.e., the body of $\varphi$ is satisfied when binding $p_i$ to $\pi_i$ for each $1 \leq i \leq n$. 
	Using standard first-order techniques it is easy to see that if some winning family of Skolem functions exists, then $\calT \models \varphi$ \cite{shoenfield2018mathematical}.
	
	\paragraph{Skolem Function Construction.}
	We construct the Skolem functions $\{\xi_i\}_{\pi_i \in \pathVars_\exists}$ using the fact that $\calG_{\calT, \varphi}^\mathit{reach}$ admits a strong plan.
	Let $\{f_i\}_{\pi_i \in \pathVars_\exists}$ be a joint policy that is a strong plan in $\calG_{\calT, \varphi}^\mathit{reach}$. 
	Here, each $f_i$ is a function $f_i : \Omega_i^+ \to \directions$ where $\Omega_i = \{\langle l_1, \ldots, l_i\rangle \mid l_1, \ldots, l_i \in L\}$ (cf.~\Cref{def:main}). 
	Recall that being a strong plan means that there exists a $N \in \nat$, such that for any $p \in \interactions(\{f_i\})$ with $|p| \geq N$, $p$ visits some state in $G$.
	 
	The intuition is to construct $\xi_i$ from $f_i$ by querying it one prefixes. 
	The Skolem function $\xi_i$ will see $i-1$ complete paths in $\calT$ and outputs an entire path.
	In contrast, $f_i$ observes the locations of the first $i-1$ paths, i.e., has only seen a prefix of the final paths, and needs to fix a direction. 
	We will use $f_i$ to construct $\xi_i$ by iteratively querying it on those prefixes. 
	Given $p_1, \ldots, p_{i-i} \in \paths(\calT)$, we define the path $\xi_i(p_1, \ldots, p_{i-i})  \in \paths(\calT)$ pointwise as follows:
	Initially, we set 
	\begin{align*}
		\xi_i(p_1, \ldots, p_{i-i})(0) = l_\mathit{init},
	\end{align*}
	i.e., the initial location of $\calT$. 
	For each $k \in \nat$, we then define $\xi_i(p_1, \ldots, p_{i-i})(k+1)$ as 
	\begin{align*}
			\kappa\bigg(\xi_i(p_1, &\ldots, p_{i-i})(k), \\
			f_i \Big(&\big\langle p_1(0), \ldots, p_{i-1}(0), \xi_i(p_1, \ldots, p_{i-i})(0) \big\rangle\\
			&\big\langle p_1(1), \ldots, p_{i-1}(1), \xi_i(p_1, \ldots, p_{i-i})(1) \big\rangle\\
			&\quad\quad\cdots \\
			&\big\langle p_1(k), \ldots, p_{i-1}(k), \xi_i(p_1, \ldots, p_{i-i})(k) \big\rangle\Big) \bigg).
	\end{align*}
	Let us unpack this definition: 
	To define the $(k+1)$th step on $\xi_i(p_1, \ldots, p_{i-i})$, we need to transition from the $k$th step ($\xi_i(p_1, \ldots, p_{i-i})(k)$) along some direction. 
	This direction will be chosen by the local policy $f_i : \Omega_i^+ \to \directions$, so we need to construct the prefix in $ \Omega_i^+$, where each element in $\Omega_i$ has the form $\langle l_1, \ldots, l_i\rangle$ (cf.~\Cref{def:main}).
	The idea is that this prefix consists exactly of the states $\langle l_1, \ldots, l_i\rangle$ that $\xi_i(p_1, \ldots, p_{i-i})$ -- together with the fixed paths $p_1, \ldots, p_{i-1}$ -- has traversed so far.
	That is, even though we already know the entire paths $p_1, \ldots, p_{i-1}$, we acts as if we only knew the prefix up to step $k$. 
	
	By following this construction, we obtain a path $\xi_i(p_1, \ldots, p_{i-i}) \in \paths(\calT)$, and have thus defined $\xi_i$. 
	
	\paragraph{Skolem Functions Are Winning.}
	
	It remains to argue that $\{\xi_i\}_{\pi_i \in \pathVars_\exists}$ is winning. 
	For this, let $(p_1, \ldots, p_n)$ be a path combination that is permitted by $\{\xi_i\}_{\pi_i \in \pathVars_\exists}$. 
	We need to argue that $[\pi_1 \mapsto p_1, \ldots, \pi_n \mapsto p_n], 0 \models_\calT \psi$. 
	Let $q_0q_1 \cdots \in Q_\psi^\omega$ be the unique run of $\calA_\psi$ on paths $(p_1, \ldots, p_n)$, which we can define inductively by
	\begin{align*}
		q_0 &:= q_{0, \psi}\\
		q_{k+1} &:= \varrho_\psi\Big(q_{k}, \bigcup_{i=1}^n \Big\{ (a, \pi_i) \mid a \in \ell\big(p_i(k)\big)  \Big\} \Big).
	\end{align*}
	By construction of $\calA_\psi$, $[\pi_1 \mapsto p_1, \ldots, \pi_n \mapsto p_n], 0 \models_\calT \psi$ is now equivalent to the fact that the unique run $q_0q_1\cdots$ is accepted by $\calA_\psi$, i.e., eventually visits a state in $F_\psi$. 
	To show that this is indeed the case, we show that
	\begin{align}\label{eq:path}
		\langle p_1(0), \ldots, p_n(0), q_0 \rangle \langle p_1(1), \ldots, p_n(1), q_1 \rangle \cdots \in S^\omega
	\end{align}
	is a (for simplicity, infinite) path in $\calG_{\calT, \varphi}^\mathit{reach}$ that is allowed by the joint policy $\{f_i\}_{\pi_i \in \pathVars_\exists}$, i.e., every prefix is contained in $\interactions(\{f_i\}_{\pi_i \in \pathVars_\exists})$.
	By assumption we have that combination  $(p_1, \ldots, p_n)$ is permitted by our Skolem functions $\{\xi_i\}_{\pi_i \in \pathVars_\exists}$. 
	For all existentially quantified path variables $\pi_i \in \pathVars_\exists$, $p_i$ (which satisfies $pi_ i = \xi_i(p_1, \ldots, p_{i-1})$) now follows exactly the directions chose by $f_i$. 
	Likewise, $\calG_{\calT, \varphi}^\mathit{reach}$ does not restrict the directions for universally quantified paths $\pi_i \in \pathVars_\forall$, as their direction is chosen non-deterministically (cf.~\Cref{def:main}).
	Lastly, the automaton states $q_0q_1\cdots$ are defined exactly as in the transitions of $\calG_{\calT, \varphi}^\mathit{reach}$.
	As $(p_1, \ldots, p_n)$ is permitted by $\{\xi_i\}_{\pi_i \in \pathVars_\exists}$, any prefix of \Cref{eq:path} is compatible with the joint policy $\{f_i\}_{\pi_i \in \pathVars_\exists}$.
	As $\{f_i\}_{\pi_i \in \pathVars_\exists}$ is a strong plan in $\calG_{\calT, \varphi}^\mathit{reach}$, at some timepoint $M \in \nat$, we get that $\langle p_1(M), \ldots, p_n(M), q_M \rangle \in G$ which, by definition of $G$ (cf.~\Cref{def:main}), is equivalent to $q_M \in F_\psi$.
	This already implies that $q_0q_1 \cdots$ is accepting (i.e., visits an accepting state), which, in turn, implies that $[\pi_1 \mapsto p_1, \ldots, \pi_n \mapsto p_n], 0 \models_\calT \psi$.
	As this holds for any combination $(p_1, \ldots, \pi_n)$ permitted by $\{\xi_i\}_{\pi_i \in \pathVars_\exists}$, we get that $\{\xi_i\}_{\pi_i \in \pathVars_\exists}$ is a winning family of Skolem functions, so $\calT \models \varphi$ as required. 
\end{proof}

\section{Encoding for Temporal Safety}\label{app:safety}

In this section, we provide a formal construction for the verification of temporal safety formulas.
We again assume that $\calT = (L, l_\mathit{init}, \directions, \kappa, \ell)$ is a fixed TS and $\varphi = \quant_1 \pi_1 \ldots \quant_n \pi_n\ldot \psi$ is a fixed safety HyperLTL formula.
We let $\calA_\psi = (Q_\psi, q_{0, \psi}, \varrho_\psi, F_\psi)$ over $2^{\ap \times \{\pi_1, \ldots, \pi_n\}}$ be a DSA that accepts exactly the words that safety $\psi$ in a safety semantics. 

\begin{definition}\label{def:safe}
	Define the QDec-POMDP $\calG_{\calT, \varphi}^\mathit{safe}$ as
	\begin{align*}
		\calG_{\calT, \varphi}^\mathit{safe} := \big(I, S, S_0,\allowbreak \{A_i\}, \delta, \{\Omega_i\}, \allowbreak\{\omega_i\},\allowbreak G\big),
	\end{align*}
	where
	\begin{align*}
		I &:= \big\{i \mid \pi_i\in \pathVars_\exists\big\}\\
		S &:= \big\{\langle l_1, \ldots, l_n, q \rangle \mid q \in Q_\psi \land l_1, \ldots, l_n\in L\big\} \;\cup \\
		&\quad\quad \quad  \big\{s_\mathit{win}, s_\mathit{lose}\big\},\\
		S_0 &:= \big\{\langle l_\mathit{init}, \ldots, l_\mathit{init}, q_{0, \psi} \rangle\big\},\\
		A_i &:= \directions,\\
		\Omega_i &:= \big\{\langle l_1, \ldots, l_i \rangle \mid l_1, \ldots, l_i \in L\big\},\\
		G &:= \big\{s_\mathit{win}\big\},
	\end{align*}
	the transition function $\delta$ is defined as
	\begin{align*}
		&\delta\big(s_\mathit{win}, \{d_i\}_{\pi_i\in \pathVars_\exists}\big) := s_\mathit{win}\\
		&\delta\big(s_\mathit{lose}, \{d_i\}_{\pi_i\in \pathVars_\exists}\big) := s_\mathit{lose}\\
		&\delta\big(\langle l_1, \ldots, l_n, q \rangle, \{d_i \in \directions\}_{\pi_i \in \pathVars_\exists}\big) := \\
		&\quad\quad\Big\{ \langle \kappa(l_1, d_1), \ldots, \kappa(l_n, d_n), q' \rangle \mid \{d_i \}_{\pi_i \in \pathVars_\forall} \, \land \\
		&\quad\quad\quad q' = \varrho_\psi\big(q, \bigcup_{i=1}^n \big\{ (a, \pi_i) \mid a \in \ell(l_i)  \big\} \big) \Big\} \cup  \{s_\mathit{win}\}  \\
		&\quad\quad\quad\quad\quad \text{when } q \not\in F_\psi\\
		&\delta\big(\langle l_1, \ldots, l_n, q \rangle, \{d_i \}_{\pi_i \in \pathVars_\exists}\big) := s_\mathit{lose}\\
		&\quad\quad\quad\quad\quad \text{when } q \in F_\psi
	\end{align*} 
	and the observation functions $\{\omega_i\}$ are defined as
	\begin{align*}
		\omega_i(\langle l_1, \ldots, l_n, q \rangle) := \langle l_1, \ldots, l_i \rangle.
	\end{align*}
\end{definition}

Compared to \Cref{def:main}, we add two new states $s_\mathit{win}$ and $s_\mathit{lose}$. 
The objective of the agents is to reach $s_\mathit{win}$ and avoid $s_\mathit{lose}$.
For both of these states, we add a self-loop, so once either of the two is reached, it is never left. 

In the transition from regular states $\langle l_1, \ldots, l_n, q\rangle$ we then distinguish if $q \in F_\psi$. 
If $q \in F_\psi$ so the automaton has reach a bad state and will reject by deterministically transitioning to $s_\mathit{lose}$ (note that from $s_\mathit{lose}$ the goal state $s_\mathit{win}$ is unreachable).
On the other hand, if $q \not\in F_\psi$, we progress similar as in \Cref{def:main}. 
The only difference is that we \emph{add} $s_\mathit{win}$ as an additional non-deterministic outcome.
As long as the agent manage to stay within states where $q \not\in F_\psi$, \emph{some} non-deterministic outcome can thus reach the goal state $s_\mathit{win}$. 
A strong \emph{cylic} plan always ensures the possibility to win and thus always stays within such good states.

\thh*
\begin{proof}
	The proof is similar to \Cref{theo:th1}. 
	We use the same construction for the Skolem functions of each existentially quantified path. 
	Following the intuition behind \Cref{def:safe}, the resulting paths are such that, within the state-space of  $\calG_{\calT, \varphi}^\mathit{safe}$ any path only visits states of the form $\langle l_1, \ldots, l_n, q \rangle$ with $q \not\in F_\psi$ or $s_\mathit{win}$; if some path would reach a state $\langle l_1, \ldots, l_n, q \rangle$ with $q \not\in F_\psi$ the joint policy would not be winning.
	In particular, the DSA $\calA_\psi$ that tracks the acceptance of $\psi$ only remains in states outside of $F_\psi$, showing $\calT \models \varphi$ as required. 
\end{proof}

\fi

\end{document}